\newcommand{\setword}[2]{%
  \phantomsection
  #1\def\@currentlabel{\unexpanded{#1}}\label{#2}%
}
\newcommand{\be}{\begin{equation}}
\newcommand{\ee}{\end{equation}}
\newcommand{\ba}{\begin{eqnarray}}
\newcommand{\ea}{\end{eqnarray}}
\newtheorem{theorem}{Theorem}
\newtheorem{proposition}{Proposition}
\newtheorem{lemma}{Lemma}
\def\>{\rangle}
\def\<{\langle}
\begin{document}

\title{Stronger Nonlocality in GHZ States: A Step Beyond the Conventional GHZ Paradox}

\author{Ananya Chakraborty}
\affiliation{Department of Physics of Complex Systems, S. N. Bose National Center for Basic Sciences, Block JD, Sector III, Salt Lake, Kolkata 700106, India.}

\author{Kunika Agarwal}
\affiliation{Department of Physics of Complex Systems, S. N. Bose National Center for Basic Sciences, Block JD, Sector III, Salt Lake, Kolkata 700106, India.}

\author{Sahil Gopalkrishna Naik}
\affiliation{Department of Physics of Complex Systems, S. N. Bose National Center for Basic Sciences, Block JD, Sector III, Salt Lake, Kolkata 700106, India.}

\author{Manik Banik}
\affiliation{Department of Physics of Complex Systems, S. N. Bose National Center for Basic Sciences, Block JD, Sector III, Salt Lake, Kolkata 700106, India.}

\begin{abstract}
The Greenberger–Horne–Zeilinger (GHZ) paradox, involving quantum systems with three or more subsystems, offers an {\it all-vs-nothing} test of quantum nonlocality. Unlike Bell tests for bipartite systems, which reveal statistical contradictions, the GHZ paradox demonstrates a definitive (i.e. $1$00\%) conflict between local hidden variable theories and quantum mechanics. Given this, how can the claim made in the title be justified? The key lies in recognising that GHZ games are typically played under a predefined promise condition for input distribution. By altering this promise, different GHZ games can be constructed. Here, we introduce a randomized variant of GHZ game, where the promise condition is randomly selected from multiple possibilities and revealed to only one of the parties chosen randomly. We demonstrate that this randomized GHZ paradox can also be perfectly resolved using a GHZ state, revealing a potentially stronger form of nonlocality than the original paradox. The claim of enhanced nonlocality is supported by its operational implications: correlations yielding perfect success in the randomized game offer a greater communication advantage than traditional GHZ correlations in a distributed multi-party communication complexity task.
\end{abstract}

\maketitle	
{\it Introduction.--} J. S. Bell's celebrated 1964 theorem establishes that quantum mechanical predictions are incompatible with local-hidden-variable theories \cite{Bell1964}. While locality posits that physical systems are influenced only by their immediate surroundings, hidden-variables refer to properties not captured by the quantum mechanical wave function but that otherwise determine experimental outcomes. Bell’s theorem, while addressing a foundational question posed by Einstein, Podolsky, and Rosen \cite{Einstein1935} (see also \cite{Bohr1935,Schrodinger1935,Schrodinger1936}), demonstrates that the conjunction of these classical views does not hold in the quantum world \cite{Bell1966, Mermin1993, Gisin2023}. The theorem is typically illustrated through the violation of Bell-type inequalities \cite{Aspect1981, Aspect1982, Aspect1982(1),Zukowski1993,Weihs1998,Hensen2015,Giustina2015,Shalm2015}, which involve the outcome statistics of freely chosen local measurements performed on the parts of a composite quantum system \cite{Brunner2014}.

Bell’s theorem, when framed as collaborative games, often provides a clearer and more intuitive understanding of quantum nonlocality. For instance, gamification of the celebrated CHSH inequality -- named after Clauser, Horne, Shimony, and Holt \cite{Clauser1969} -- leads to the XOR game \cite{Cleve2004}, where two non communicating parties, Alice and Bob, aim to fulfill the winning condition \( a \oplus b = xy \), with \( a, b \in \{0,1\} \) being their local outputs for respective inputs \( x, y \in \{0,1\} \). Under classical strategies, the optimal success rate in the XOR game is \( 3/4 \), while quantum mechanics allows the optimal success \( P_q = 1/2(1 + 1/\sqrt{2}) \) \cite{Cirelson1980}. A particularly intriguing class of Bell games are quantum pseudo-telepathy games, where quantum strategies ensure perfect success, but classical strategies are strictly suboptimal \cite{Brassard1999}. The first example of such a game involving more than two players was proposed by Greenberger, Horne, and Zeilinger (GHZ) \cite{Greenberger1989} (see also \cite{Mermin1990, Greenberger1990,Bouwmeeste1999, Pan2000,Deng2010,Liu2021}). In the three-party scenario, the winning condition reads as \( a \oplus b \oplus c = x \lor y \lor z \), where \( a, b, c \in \{0,1\} \) are the local outputs for the respective inputs \( x, y, z \in \{0,1\} \). Unlike the XOR game, here the inputs are sampled according to a promise condition \( x \oplus y \oplus z = 0 \). We will denote this game as GHZ\(_{\bf E}\), as the promise condition follows an even parity constraint. Classical players can achieve at most a \( 3/4 \) success rate, whereas quantum players, sharing a three-qubit GHZ state \( \ket{G_3} = (\ket{0}^{\otimes3} + \ket{1}^{\otimes3})/\sqrt{2} \), can achieve perfect success. By altering the promise condition to \( x \oplus y \oplus z = 1 \), a variant of the game, called GHZ\(_{\bf O}\), is obtained, where the winning condition becomes \( a \oplus b \oplus c = x \land y \land z \). Notably, in both GHZ\(_{\bf E}\) and GHZ\(_{\bf O}\), the promise condition is known to all players.

The assertion that GHZ states exhibit nonlocality stronger than the GHZ paradox is explored using a randomized variant of the standard GHZ games, termed R$_2$GHZ. In R$_2$GHZ, the Referee randomly selects which promise condition to apply when distributing inputs, with this information revealed to only one player, who is again chosen randomly. Remarkably, the state \( \ket{G_3} \) still enables a perfect strategy for R$_2$GHZ, leading to correlations that demonstrate stronger nonlocality than conventional GHZ games. The claim of stronger nonlocality is substantiated through operational utility of the correlation in communication complexity tasks \cite{Buhrman2010}. Building on the conventional GHZ game and the R$_2$GHZ variant, we introduce two communication complexity tasks, \(\mathrm{CC_2}\) and \(\mathrm{R_2CC_2}\) respectively, involving three distant servers: Alice and Bob as senders, and Charlie as the computing server. Both tasks can be executed with only one bit of communication from Alice and Bob to Charlie, provided they share the state \( \ket{G_3} \). In contrast, without the GHZ state, \(\mathrm{CC_2}\) requires Charlie to share a 1-bit channel with one sender and two 1-bit channels with the other, while \(\mathrm{R_2CC_2}\) requires two 1-bit channels from both senders. Thus, the GHZ state provides a one-bit advantage in \(\mathrm{CC_2}\) and a two-bit advantage in \(\mathrm{R_2CC_2}\), underscoring stronger nonlocality in the latter.

{\it Three-party GHZ game.--} The Referee provides binary inputs \(x, y, z \in \{0,1\}\) to three distant parties -- Alice, Bob, and Charlie -- who return binary outputs \(a, b, c \in \{0,1\}\), respectively. Unlike the CHSH game \cite{Clauser1969}, where inputs are sampled independently and uniformly at random, in this case Referee chooses the inputs uniformly at random while satisfying a specific promise condition, which accordingly fixes the winning criteria. Two variants of the game can be defined as follows:
\begin{subequations}
\begin{align}
&\text{(i) Even Parity GHZ Game [GHZ$_{\bf E}$]~:--}\nonumber\\ 
&~~~~~\left\{\!\begin{aligned}
\text{Promise Condition,}~{\bf P_E}:~x \oplus y \oplus z = 0,\\
\text{Winning Condition:}~a \oplus b \oplus c = x \lor y \lor z
\end{aligned}\right\};\\
&\text{(ii) Odd Parity GHZ Game [GHZ$_{\bf O}$]~:--}\nonumber\\ 
&~~~~~\left\{\!\begin{aligned}
\text{Promise Condition,}~{\bf P_O}:~x \oplus y \oplus z = 1,\\
\text{Winning Condition:}~a \oplus b \oplus c = x \land y \land z
\end{aligned}\right\}.
\end{align}
\end{subequations}
Interestingly, both games can be won perfectly if the players share the state \(\ket{G_3}\). The quantum strategies for GHZ$_{\bf E}$ and GHZ$_{\bf O}$ are detailed in Table \ref{tab1}. Notably, for fixed strategies of Bob and Charlie, perfect success in GHZ$_{\bf E}$ and GHZ$_{\bf O}$ can be achieved by adjusting only Alice's strategy according to the respective promise conditions. This observation will play a crucial role in the subsequent sections.

{\it Advantage in communication complexity.--} Communication complexity deals with minimizing the amount of classical communication required to compute functions when inputs are distributed among different parties \cite{Yao1979, Kushilevitz1996, Kushilevitz1997}. Quantum communication complexity extends this concept by incorporating quantum resources. In the framework proposed by Yao \cite{Yao1993}, qubit transmission is permitted among the parties. Meanwhile, Cleve \& Buhrman’s entanglement-based model \cite{Cleve1997} involves sharing quantum entanglement and using classical bits for communication. The quantum advantage in this latter model arises from the nonlocal correlations inherent in quantum entanglement \cite{Buhrman2010}.

In a recent study \cite{Chakraborty2024}, the authors investigated a class of communication complexity problems, denoted $\mathrm{CC_n}$, involving $n$ distant senders and one computing server, to establish a scalable advantages of $(n+1)$-qubit GHZ state \(\ket{G_{n+1}}\). The specific case of \(n=2\) is of particular interest here. Two variants of the problem can be defined based on the promise conditions.
\begin{itemize}
\item[]$\mathrm{CC^E_2}$: Alice, Bob, and Charlie are given two-bit strings ${\bf x}=x^0x^1$, ${\bf y}=y^0y^1$, and ${\bf z}=z^0z^1$, sampled uniformly at random, satisfying the promise condition $x^0 \oplus y^0 \oplus z^0 = 0$. Charlie’s task is to compute the function
\begin{align}
f_E({\bf x},{\bf y},{\bf z}) = x^1 \oplus y^1 \oplus z^1 \oplus (x^0 \lor y^0 \lor z^0).
\end{align}
\item[]$\mathrm{CC^O_2}$: In this case, the strings ${\bf x}$, ${\bf y}$, and ${\bf z}$, satisfy the promise condition $x^0 \oplus y^0 \oplus z^0 = 1$, while Charlie must compute the function
\begin{align}
f_O({\bf x},{\bf y},{\bf z}) = x^1 \oplus y^1 \oplus z^1 \oplus (x^0 \land y^0 \land z^0).
\end{align}
\end{itemize}
Interesting, $\ket{G_3}$ state becomes efficient for achieving the tasks $\mathrm{CC^E_2}$ and $\mathrm{CC^O_2}$s.
\begin{theorem}\label{theo1}
The functions \( f_E \) and \( f_O \) can be computed exactly by Charlie with $1$ bit of communication from each of Alice and Bob, provided they share the state \(\ket{G_3}\).   
\end{theorem}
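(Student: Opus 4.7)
The plan is to reduce both communication tasks directly to the perfect quantum strategies for GHZ$_{\bf E}$ and GHZ$_{\bf O}$ that are already in hand (Table \ref{tab1}). Concretely, each party treats the first bit of its two-bit input as a GHZ-game input: Alice feeds $x^0$ into her GHZ-game measurement, Bob feeds $y^0$, and Charlie feeds $z^0$, producing local outcomes $a,b,c\in\{0,1\}$ respectively. Since the promise on $x^0\oplus y^0\oplus z^0$ matches the promise of the corresponding GHZ game and the shared resource is $\ket{G_3}$, we already know
\begin{equation}
a\oplus b\oplus c \;=\; x^0\lor y^0\lor z^0 \quad\text{(case $f_E$)}
\end{equation}
and analogously $a\oplus b\oplus c=x^0\land y^0\land z^0$ in the $f_O$ case, with probability one.

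Next I would handle the ``second bit'' by a one-time-pad trick: Alice sends the single bit $\alpha:=a\oplus x^1$ and Bob sends $\beta:=b\oplus y^1$ to Charlie. Charlie, who holds her own outcome $c$ and her own second bit $z^1$, computes
\begin{equation}
\alpha\oplus\beta\oplus c\oplus z^1 \;=\; (a\oplus b\oplus c)\oplus(x^1\oplus y^1\oplus z^1).
\end{equation}
Substituting the GHZ identity above gives exactly $f_E$ in the even-parity case and $f_O$ in the odd-parity case, so Charlie's output is correct deterministically, using only one classical bit from each sender.

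Writing the argument in this way makes it essentially a bookkeeping exercise once the GHZ quantum strategies are invoked; the only nontrivial content is the observation that XOR-masking the GHZ outcomes with the second input bits is exactly the structure of $f_E$ and $f_O$. I expect no real obstacle: the main thing to verify carefully is that the same pair of one-bit messages $(\alpha,\beta)$ works in both the $E$ and $O$ variants, which follows because in both cases Alice and Bob can adapt only their local GHZ-game strategy to the announced promise (as emphasised after Table \ref{tab1}) while the encoding $\alpha=a\oplus x^1$, $\beta=b\oplus y^1$ and Charlie's decoding rule remain unchanged.
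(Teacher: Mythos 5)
Your proposal is correct and is essentially identical to the paper's own proof: both feed the first bits $x^0,y^0,z^0$ into the perfect GHZ$_{\bf E}$/GHZ$_{\bf O}$ strategy of Table~\ref{tab1}, have Alice and Bob send the XOR-masked bits $a\oplus x^1$ and $b\oplus y^1$, and let Charlie decode via $c_A\oplus c_B\oplus c\oplus z^1$. No further comment is needed.
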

\begin{proof}
We detail the protocol for \(\mathrm{CC^O_2}\). A similar argument follows for \(\mathrm{CC^E_2}\) (see also \cite{Chakraborty2024}). Each party treats the first bit of their input string as the input for the odd-parity GHZ game and follows the protocol outlined in Table \ref{tab1}. Consequently, their local outcomes \(a, b, c\) satisfy the condition \(a \oplus b \oplus c = x^0 \land y^0 \land z^0\). Alice and Bob then respectively communicate \(c_A := a \oplus x^1\) and \(c_B := b \oplus y^1\) to Charlie. Charlie's final computation is: $c_A \oplus c_B \oplus c \oplus z^1 = x^1 \oplus y^1 \oplus z^1 \oplus (a \oplus b \oplus c) = x^1 \oplus y^1 \oplus z^1 \oplus (x^0 \land y^0 \land z^0) = f_O$. This completes the proof.
\end{proof}
As it turns out without $\ket{G_3}$ state Charlie fails to evaluate the functions \( f_E \) and \( f_O \) with limited classical communication from Alice and Bob.
\begin{table}[t!]
\centering
\begin{tabular}{|c||c|c|c|c|c|}
\hline
&\multicolumn{2}{|c|}{GHZ$_{\bf E}$} & \multicolumn{2}{|c|}{GHZ$_{\bf O}$}\\
\hline
&Input $=0$&Input $=1$&Input $=0$&Input $=1$\\
\hline
Alice& $\sigma_1$ & $\sigma_2$ & $\overline{\sigma}_2$ & $\sigma_1$\\
\hline
Bob& $\sigma_1$ & $\sigma_2$ & $\sigma_1$ & $\sigma_2$\\
\hline
Charlie& $\sigma_1$ & $\sigma_2$ & $\sigma_1$ & $\sigma_2$\\
\hline
\end{tabular}
\caption{Here, \(\sigma_1\) and \(\sigma_2\) correspond to Pauli-\(X\) and Pauli-\(Y\) measurements, respectively. For a Pauli measurement \(\sigma_{\hat{n}}\) along the direction \(\hat{n}\), \(\overline{\sigma}_{\hat{n}}\) denotes the measurement along the opposite direction \(-\hat{n}\). When observing measurement outcome \(+1\) or \(-1\), the parties declare their results as \(0\) or \(1\), respectively.}\vspace{-.3cm}
\label{tab1}
\end{table}
\begin{theorem}\label{theo2}
With $1$ bit of communication from each of Alice and Bob, Charlie cannot compute either \( f_E \) or \( f_O \) exactly, even with the assistance of classical shared randomness.
\end{theorem}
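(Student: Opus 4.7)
The strategy is proof by contradiction via reduction to the impossibility of a perfect classical GHZ strategy; I treat $f_O$ (the $f_E$ case is parallel, with $\lor$ replacing $\land$ and the even-parity promise). Assume such a one-bit protocol exists. Shared randomness cannot help an exact protocol: any distribution over deterministic strategies whose average is correct with probability $1$ on every input must be supported on deterministic strategies that are themselves individually correct on every input. So I may assume deterministic messages $c_A = c_A(x^0, x^1) \in \{0,1\}$, $c_B = c_B(y^0, y^1) \in \{0,1\}$ and a deterministic decoder $\phi(c_A, c_B, z^0, z^1)$ that equals $f_O$ on every admissible input.

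The key step is peeling off the XOR-linearity of $f_O = x^1 \oplus y^1 \oplus z^1 \oplus (x^0 \land y^0 \land z^0)$ in the second-layer bits. Flipping $z^1$ alone flips $f_O$ while leaving $c_A, c_B, z^0$ unchanged, forcing $\phi(c_A, c_B, z^0, z^1) = \tilde\phi(c_A, c_B, z^0) \oplus z^1$ for some $\tilde\phi$. Fixing $x^0, y^0, y^1, z^0, z^1$ and flipping $x^1$ flips $f_O$, hence also flips $\tilde\phi$; since its only argument that can change is $c_A$, one must have $c_A(x^0, 0) \neq c_A(x^0, 1)$ for every $x^0$, i.e.\ $c_A(x^0, x^1) = \alpha(x^0) \oplus x^1$ for some $\alpha \colon \{0,1\} \to \{0,1\}$. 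Symmetrically $c_B(y^0, y^1) = \beta(y^0) \oplus y^1$, and the same binary-flip argument on $\tilde\phi$ in its first two slots gives $\tilde\phi(c_A, c_B, z^0) = c_A \oplus c_B \oplus \gamma(z^0)$ for some $\gamma$.

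Substituting these forms into $\phi \equiv f_O$ and cancelling the $x^1, y^1, z^1$ terms leaves
\begin{equation}
\alpha(x^0) \oplus \beta(y^0) \oplus \gamma(z^0) = x^0 \land y^0 \land z^0
\end{equation}
for every $(x^0,y^0,z^0)$ with $x^0 \oplus y^0 \oplus z^0 = 1$, which is precisely a deterministic perfect classical strategy for GHZ$_{\bf O}$. Summing the four constraints modulo $2$ annihilates the left-hand side, while the right-hand side sums to $1$ --- the standard GHZ contradiction. The $f_E$ template produces a perfect GHZ$_{\bf E}$ strategy, ruled out by the same parity argument. I expect the main obstacle to be a careful discharge of the decomposition step: one has to verify that the successive ``flip one second-layer bit'' arguments can be invoked for every admissible configuration, which rests on the observation that the promise constrains only $x^0, y^0, z^0$, leaving both values of each $x^1, y^1, z^1$ freely available for every admissible fibre of the other inputs.
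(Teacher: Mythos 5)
Your proof is correct, and it takes a genuinely different route from the paper's. The paper parametrizes Alice's and Bob's encodings and Charlie's two decoders by their algebraic normal forms $\alpha s^0\oplus\beta s^1\oplus\gamma s^0s^1\oplus\delta$ and then performs an explicit (self-described ``lengthy but straightforward'') calculation, extracting one set of coefficient constraints for $z^0=0$ and another for $z^0=1$; the contradiction is the pair of incompatible requirements $\alpha_A=\alpha_B$ and $\alpha_A\neq\alpha_B$. You instead exploit the XOR-linearity of $f_O$ in the second-layer bits through single-bit flip arguments to force the encodings and decoder into the affine form $c_A=\alpha(x^0)\oplus x^1$, $c_B=\beta(y^0)\oplus y^1$, $\phi=c_A\oplus c_B\oplus\gamma(z^0)\oplus z^1$, which collapses the task to a perfect deterministic classical strategy for GHZ$_{\bf O}$, killed by the standard sum-over-the-promise-set parity contradiction. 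Your version is shorter, avoids the coefficient case analysis, and makes explicit the conceptual point (only implicit in the paper) that the hardness of $\mathrm{CC}^O_2$ under $1+1$ bits is exactly the classical unwinnability of the underlying GHZ game; the paper's brute-force parametrization is more mechanical but requires no separate structural lemma and verifies all constraints in one sweep. Both proofs dispose of shared randomness identically, by noting that an exact randomized protocol must be supported on exact deterministic ones. The only places where your argument needs the care you flag at the end --- that every flip of a second-layer bit stays inside the promise set, and that both values of each of $c_A,c_B$ are realized on every admissible fibre --- are indeed discharged by the observation that the promise constrains only $x^0,y^0,z^0$, so no gap remains.
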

\begin{proof}
We present the proof for \( f_O \) only; a similar argument holds for \( f_E \) (see \cite{Chakraborty2024}). Charlie can compute \( f_O \) if and only if he can compute \( f'_O({\bf x}, {\bf y}, z^0) = x^1 \oplus y^1 \oplus (x^0 \land y^0 \land z^0) \). Given the promise condition \( x^0 \oplus y^0 \oplus z^0 = 1 \), the function \( f'_O \) can be rewritten as:
\begin{subequations}\label{f3}
\begin{align}
f^\prime_O({\bf x}, {\bf y}, 0) &= x^1 \oplus y^1, \hspace{2cm}\text{with} ~ x^0\neq y^0; \label{f30}\\
f^\prime_O({\bf x}, {\bf y}, 1) &=
x^1 \oplus y^1\oplus(x^0\land y^0), ~~ \text{with} ~ x^0=y^0.\label{f31}
\end{align}   
\end{subequations}
A general function $g:\{0,1\}^{\times 2} \mapsto \{0,1\}$ can be expressed as $g^{\alpha\beta\gamma\delta}(s^0, s^1) := \alpha s^0 \oplus \beta s^1 \oplus \gamma s^0 s^1 \oplus \delta$, with $\alpha, \beta, \gamma, \delta, s^0, s^1 \in \{0,1\}$. Consequently, the communication from Alice and Bob to Charlie can be represented as $c_i := \mathcal{E}_i({\bf k}_i) = \alpha_i k^0_i \oplus \beta_i k^1_i \oplus \gamma_i k^0_i k^1_i \oplus \delta_i$, with $i\in\{A,B\},~{\bf k}_A={\bf x},~\&~{\bf k}_B={\bf y}$. To compute the desired function in Eq. (\ref{f3}), Charlie applies a generic decoding function to the communication bits $c_A$ and $c_B$ received from Alice and Bob, respectively. Notably, Charlie's decoding can depend on the input $z^0$. Denoting the respective decoding functions as $\mathcal{D}_j(c_A, c_B) = \alpha^j_C c_A \oplus \beta^j_C c_B \oplus \gamma^j_C c_A c_B \oplus \delta^j_C$ for $j\in\{0,1\}$, exact computability of $f_O$ thus demands 
\begin{align}
\mathcal{D}_0=f^\prime_O({\bf x},{\bf y},0)~~~~\&~~~~\mathcal{D}_1=f^\prime_O({\bf x},{\bf y},1).   
\end{align}
A lengthy but straightforward calculation results in:
\begin{itemize}
\item[(i)] {For $z^0=0$, we have the restrictions $\beta_A=\beta_B=\beta^0_C=\alpha^0_C=1,~\gamma^0_C=0$, and $\alpha_A=\alpha_B$;}
\item[(ii)] {For $z^0 = 1$, we have the restrictions $\alpha^1_C = \beta^1_C =\beta_A=\beta_B = 1, \gamma^1_C = 0, \delta_A= \delta_B\oplus\delta^1_C$ and $\alpha_A \neq \alpha_B$.}
\end{itemize}
Since the restrictions in (i) and (ii) are not consistent with each other, therefore no consistent encodings $\mathcal{E}_A,\mathcal{E}_B$ and decodings $\mathcal{D}_0,\mathcal{D}_1$ exist that can evaluate the function $f_O$ exactly. Having no deterministic strategy their probabilistic mixtures also fail to compute the function $f_O$. This completes the proof.   
\end{proof}
Naturally, the question arises: how much classical resources are sufficient to evaluate the functions \( f_E \) and \( f_O \)? Our next proposition addresses this question.
\begin{proposition}\label{prop1}
Charlie can evaluate the functions \( f_E \) and \( f_O \) exactly with $2$ bits of communication from one sender (say Alice) and $1$ bit of communication from the other (Bob).
\end{proposition}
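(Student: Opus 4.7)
The plan is to give an explicit, deterministic $(2{+}1)$-bit protocol that handles both $\mathrm{CC^E_2}$ and $\mathrm{CC^O_2}$, exploiting the fact that under either promise, any two of the three ``low'' bits $x^0, y^0, z^0$ determine the third. Concretely, I would have Alice simply forward her whole input by sending $c^0_A := x^0$ and $c^1_A := x^1$ (using her two-bit budget), while Bob sends only $c_B := y^1$ (using his one-bit budget). The engineering question is then whether Charlie, armed with $(x^0, x^1, y^1, z^0, z^1)$ and the known promise, can reconstruct the missing ingredient $y^0$ and thereby evaluate the target function.

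The reconstruction is immediate. Let $p \in \{0,1\}$ denote the promise parity (so $p=0$ for $\mathrm{CC_2^E}$ and $p=1$ for $\mathrm{CC_2^O}$). Because $x^0 \oplus y^0 \oplus z^0 = p$, Charlie computes $\tilde{y}^0 := c^0_A \oplus z^0 \oplus p$ and this necessarily equals $y^0$. Having now reconstructed the full triple $(x^0, y^0, z^0)$ and already holding $x^1, y^1, z^1$, Charlie outputs
\[
c^1_A \oplus c_B \oplus z^1 \oplus B(x^0, y^0, z^0),
\]
with $B = \lor$ in the even-parity task and $B = \land$ in the odd-parity task. A direct comparison with the definitions of $f_E$ and $f_O$ confirms that this equals the desired function value in both cases.

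Since the construction is fully explicit and deterministic with exactly $2+1$ bits used, there is no real obstacle here; the only thing worth spelling out is that Charlie knows which promise is in force and therefore which $p$ and which Boolean connective $B$ to use, which is automatic from the problem specification. Together with Theorem \ref{theo2}, this proposition pinpoints $2+1$ bits as the exact classical cost, while Theorem \ref{theo1} reduces the cost to $1+1$ bits once $\ket{G_3}$ is shared, establishing the one-bit quantum advantage that will later be contrasted with the two-bit advantage of the randomized task.
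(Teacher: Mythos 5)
Your proposal is correct and follows essentially the same route as the paper's own proof: Alice forwards her full string $\mathbf{x}$, Bob sends $y^1$, and Charlie recovers $y^0$ from the promise condition before evaluating the function. The only difference is that you write out the reconstruction formula $y^0 = x^0 \oplus z^0 \oplus p$ explicitly, which the paper leaves implicit.
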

\begin{proof}
Using $2$ bits, Alice communicates her string \({\bf x}\) to Charlie, while Bob communicates \(y^1\). With knowledge of the promise condition, Charlie can deduce the first bit \(y^0\) of Bob's string and thus compute the corresponding function.
\end{proof}
Proposition \ref{prop1}, along with Theorems \ref{theo1} and \ref{theo2}, implies that the nonlocal correlation in GHZ$_{\bf E}$ (GHZ$_{\bf O}$) game provide a $1$-bit communication advantage in performing the \(\mathrm{CC^E_2}\) (\(\mathrm{CC^O_2}\)) task.   

{\it Randomized GHZ game.--} Consider a scenario where the Referee has a random variable \( r_1 \) that uniformly takes values from the set \(\{\mathrm{E}, \mathrm{O}\}\). Depending on the value of \( r_1 \), the Referee asks the parties to play either GHZ$_{\bf E}$ or GHZ$_{\bf O}$. However, value of the parity bit \( r_1 \) is revealed only to one party. This resulting game we termed as the RGHZ game. The Referee can introduce an additional layer of randomization with another independent random variable \( r_2 \), which uniformly takes values from the set \(\{\mathrm{A}, \mathrm{B}\}\). Depending on whether \( r_2 \) takes the value \(\mathrm{A}\) or \(\mathrm{B}\), the parity bit \( r_1 \) is revealed to either Alice or Bob, respectively \cite{Self}. Given the involvement of two random variables, this resulting game is termed as R$_2$GHZ. Denoting \(\mathrm{E} \equiv 0\) \& \(\mathrm{O} \equiv 1\) and \(\mathrm{A} \equiv 0\) \& \(\mathrm{B} \equiv 1\), the RGHZ game can be formally defined as follows:\\
$\Rightarrow$ Alice is provided with the inputs $x,r_2,r_1\overline{r}_2$; Bob the inputs $y,r_2,r_1r_2$; and Charlie the input $z$.\\
$\Rightarrow$ $x,y,z, r_1$ are uniformly sampled satisfying the promise condition, ${\bf P}_R:~x\oplus y\oplus z=r_1$.\\
$\Rightarrow$ The outcomes need to satisfy the winning condition $a\oplus b\oplus c=[(x\lor y\lor z)\land \overline{r}_1]\lor[(x\land y\land z)\land r_1]:=\Omega$.
\begin{lemma}\label{lemma1}
The R$_2$GHZ game can be perfectly won by Alice, Bob, and Charlie if they share the GHZ state \(\ket{G_3}\).
\end{lemma}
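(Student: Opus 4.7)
The plan is to leverage two facts already apparent from Table~\ref{tab1}. First, in the stated perfect strategies for both GHZ$_{\bf E}$ and GHZ$_{\bf O}$, only Alice's measurement depends on the promise bit; Bob and Charlie apply the same $\sigma_1/\sigma_2$ map in either game. Second, $\ket{G_3}$ is invariant under permutations of its three qubits, so the Alice$\,\leftrightarrow\,$Bob swap leaves the shared resource untouched. Combined, these two facts say that whichever sender receives $r_1$ can single-handedly play the parity-dependent role, while the remaining two parties run a fixed, $r_1$-oblivious strategy.

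Concretely, I would split the protocol into two cases based on $r_2$. For $r_2 = 0$, Alice receives $r_1\overline{r}_2 = r_1$ and plays the ``Alice'' row of Table~\ref{tab1} appropriate to $r_1$ using her input $x$, while Bob and Charlie apply the fixed $\sigma_1/\sigma_2$ strategy on inputs $y$ and $z$ respectively; this is verbatim the GHZ$_{\bf E}$ (resp.\ GHZ$_{\bf O}$) protocol when $r_1 = 0$ (resp.\ $r_1 = 1$). For $r_2 = 1$, Bob receives $r_1 r_2 = r_1$ and takes over the adjusting role, selecting his measurement from the ``Alice'' columns of Table~\ref{tab1} based on $y$ and $r_1$, while Alice and Charlie apply the fixed strategy. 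By the permutation symmetry of $\ket{G_3}$, this is exactly the original GHZ$_{\bf E}$/GHZ$_{\bf O}$ protocol with the first two parties relabelled, and so still produces $a\oplus b\oplus c = x\lor y\lor z$ when $r_1 = 0$ and $a\oplus b\oplus c = x\land y\land z$ when $r_1 = 1$.

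In each branch the R$_2$GHZ winning predicate $\Omega$ collapses to the corresponding standard GHZ winning predicate, so Lemma~\ref{lemma1} follows with no further measurement analysis. The only real obstacle is bookkeeping: one must verify that in the $r_2 = 0$ branch Bob's third input bit is $r_1 r_2 = 0$, so his measurement genuinely does not depend on $r_1$, and symmetrically that Alice's third input bit is $r_1\overline{r}_2 = 0$ when $r_2 = 1$. This is immediate from the definitions $r_1\overline{r}_2$ and $r_1 r_2$, which effectively gate the parity information to exactly one sender while keeping the uninformed sender's strategy entirely fixed. Thus the informed-party-adjusts construction is well-defined on the given input alphabet and wins the R$_2$GHZ game with probability one.
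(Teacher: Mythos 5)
Your proposal is correct and follows essentially the same route as the paper's proof: for $r_2=0$ only Alice adapts her measurement to $r_1$ while Bob and Charlie run the fixed $\sigma_1/\sigma_2$ strategy of Table~\ref{tab1}, and for $r_2=1$ Alice and Bob interchange roles. The only difference is that you make explicit the permutation symmetry of $\ket{G_3}$ and the reduction of $\Omega$ to the standard winning predicates, which the paper leaves implicit.
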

\begin{proof}
For $r_2=0$ case, Bob's and Charlie's strategies are independent of $r_1$, whereas Alice modifies her strategy depending on the random variable $r_1$ (see Table \ref{tab1}). For $r_2=1$, Alice and Bob interchange their respective strategies in Table \ref{tab1}.   
\end{proof}
A simple approach to quantifying the strength of nonlocal correlations is to compare their success rate in the corresponding Bell game to the best achievable classical success. For instance, the correlations obtained from the GHZ state \(\ket{G_3}\) achieve perfect success in the GHZ$_{\bf E}$ and GHZ$_{\bf O}$ games, while the optimal classical success in these games is only \( 3/4 \). In this sense, stronger nonlocality of the correlations yielding perfect success in R$_2$GHZ games is not apparent.
\begin{proposition}\label{prop2}
The optimal classical successes of the RGHZ and R$_2$GHZ games are both upper bounded by \( 3/4 \). Furthermore, this bound is achievable.
\end{proposition}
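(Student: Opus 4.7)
The plan is to decompose each randomized game by conditioning on the hidden parameters, reduce to the standard classical GHZ bounds of $3/4$ for GHZ$_{\bf E}$ and GHZ$_{\bf O}$, and then exhibit an explicit deterministic strategy that saturates this bound.

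\emph{Upper bound.} Since the winning probability is linear in the distribution over deterministic strategies, shared randomness gives no advantage, and one may restrict to deterministic functions. For RGHZ with Alice informed of $r_1$, these take the form $a = f(x, r_1)$, $b = g(y)$, $c = h(z)$. Because $r_1$ is uniform, the overall success decomposes as $P = \tfrac{1}{2}\bigl(P_E + P_O\bigr)$, where $P_E$ is the success of the triple $(f(\cdot,0), g, h)$ in GHZ$_{\bf E}$ and $P_O$ the success of $(f(\cdot,1), g, h)$ in GHZ$_{\bf O}$. The standard pseudo-telepathy argument — XORing the four winning equations of GHZ$_{\bf E}$ produces the parity contradiction $0 = 1$, and analogously for GHZ$_{\bf O}$ — forces $P_E, P_O \le 3/4$ independently, whence $P \le 3/4$. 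For R$_2$GHZ I would further condition on $r_2$, yielding two symmetric sub-cases each structurally identical to RGHZ (Alice informed when $r_2 = 0$, Bob informed when $r_2 = 1$); each is bounded by $3/4$ by the previous step, so $P_{R_2} \le 3/4$.

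\emph{Achievability.} An explicit saturating strategy is: Bob and Charlie output $0$ irrespective of their inputs, while the party who receives $r_1$ outputs $\overline{r}_1$. Tabulating the eight allowed input triples against the winning condition $\Omega$ shows that this strategy fails only on $(x,y,z) = (0,0,0)$ when $r_1 = 0$ and only on $(x,y,z) = (1,1,1)$ when $r_1 = 1$, so each conditional sub-game is won with probability $3/4$; the randomized average is then exactly $3/4$, and the identical prescription — applied to whichever of Alice or Bob holds $r_1$ — saturates the R$_2$GHZ bound as well. The only conceptual point deserving care is that Bob's and Charlie's functions $g, h$ must be \emph{shared} across the two RGHZ sub-games, whereas Alice may switch with $r_1$; the upper bound is nevertheless unaffected because the classical GHZ contradiction holds for \emph{every} choice of $(f, g, h)$, not merely the jointly optimal one, so the two sub-games effectively decouple for upper-bounding purposes even though they remain coupled for joint optimization.
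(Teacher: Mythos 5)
Your proof is correct and follows essentially the same route as the paper: the upper bound is obtained by conditioning on $r_1$ (and $r_2$) and invoking the $3/4$ bound for each of GHZ$_{\bf E}$ and GHZ$_{\bf O}$ separately, which is exactly the paper's contrapositive argument made explicit. Your saturating strategy (uninformed parties output $0$, the informed party outputs $\overline{r}_1$) differs from the paper's choice ($a=\overline{x},\,b=y,\,c=z$ for $r_2=0$ and the role-swapped version for $r_2=1$), but it checks out and loses on the same two input triples, so the achievability claim stands.
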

\begin{proof}
The upper bound \( 3/4 \) for the RGHZ game follows from the fact that a classical strategy achieving success greater than \( 3/4 \) would imply a success strictly greater than \( 3/4 \) in at least one of the GHZ$_{\bf E}$ or GHZ$_{\bf O}$ games, which is not possible. A similar argument applies to the R$_2$GHZ game.

To achieve this bound in RGHZ game, the parties can use the classical deterministic strategy \( a = \overline{x} \), \( b = y \), and \( c = z \). As shown in Table \ref{tab2}, this strategy yields the success \( 3/4 \). In the case of the R$_2$GHZ game, the bound of \( 3/4 \) is also achievable with the strategy \( a = \overline{x} \), \( b = y \), and \( c = z \) for \( r_2 = 0 \), and \( a = x \), \( b = \overline{y} \), and \( c = z \) for \( r_2 = 1 \). This completes the proof. 
\end{proof}
In a sense, Proposition \ref{prop2} is intuitive: how can the classical randomization  reveal a stronger form of nonlocality? Intriguingly, we will now demonstrate that this indeed amplifies the strength of nonlocality, which will be illustrated through the communication advantage obtained in a randomized variant, \(\mathrm{R_2CC_2}\), of the communication complexity task \(\mathrm{CC^E_2}/\mathrm{CC^O_2}\).
\begin{table}[t!]
\centering
\begin{tabular}{|c||c|c|c|c|c|}
\hline
$~~~r_1~~~$& $~~~x~y~z~~~$ & $~~~\Omega~~~$ & $~~~a~b~c~~~$ & $~a\oplus b\oplus c~$ & $~~\checkmark/\times~~$ \\ \hline\hline
\multirow{4}{*}{$0$} & $0~0~0$ & $0$ & $1~0~0$ & $1$ & $\times$ \\ \cline{2-6} 
                     & $0~1~1$ & $1$ & $1~1~1$ & $1$ & $\checkmark$ \\ \cline{2-6} 
                     & $1~0~1$ & $1$ & $0~0~1$ & $1$ & $\checkmark$ \\ \cline{2-6} 
                     & $1~1~0$ & $1$ & $0~1~0$ & $1$ & $\checkmark$ \\ \hline\hline
\multirow{4}{*}{$1$} & $0~0~1$ & $0$ & $1~0~1$ & $0$ & $\checkmark$ \\ \cline{2-6} 
                     & $0~1~0$ & $0$ & $1~1~0$ & $0$ & $\checkmark$ \\ \cline{2-6} 
                     & $1~0~0$ & $0$ & $0~0~0$ & $0$ & $\checkmark$ \\ \cline{2-6} 
                     & $1~1~1$ & $1$ & $0~1~1$ & $0$ & $\times$ \\ \hline
\end{tabular}
\caption{The deterministic strategy \( a = \overline{x} \), \( b = y \), and \( c = z \) satisfies the winning condition in $6$ out of the $8$ possible cases, thereby achieving the success rate \( 3/4 \).}\vspace{-.3cm}
\label{tab2}
\end{table}

{\it The task $\mathrm{R_2CC_2}$.--} Similarly to the \(\mathrm{CC^E_2}/\mathrm{CC^O_2}\) task, Alice, Bob, and Charlie are given two-bit strings \({\bf x} = x^0x^1\), \({\bf y} = y^0y^1\), and \({\bf z} = z^0z^1\). Additionally, Alice receives the bits \(r_2\) \& \(r_1 \overline{r}_2\), while Bob receives the bits \(r_2\) \& \(r_1 r_2\). The inputs are sampled satisfying the promise condition \(x^0 \oplus y^0 \oplus z^0 = r_1\). Charlie is tasked with evaluating the function: 
\begin{align}
&f_R({\bf x},{\bf y},{\bf z},r_1,r_2)=x^1\oplus y^1\oplus z^1\oplus\Omega_0,
\end{align}
where, $\Omega_0:=[(x^0\lor y^0\lor z^0)\land \overline{r}_1]\lor[(x^0\land y^0\land z^0)\land r_1]$.
\begin{theorem}\label{theo3}
The function \( f_R \) can be exactly evaluated by Charlie with 1 bit of communication from each of Alice and Bob, provided the parties share the GHZ state \(\ket{G_3}\).
\end{theorem}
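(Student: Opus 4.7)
The plan is to mimic the proof of Theorem \ref{theo1}, but with the quantum subprotocol supplied by Lemma \ref{lemma1} in place of the standard GHZ$_{\bf E}$/GHZ$_{\bf O}$ protocol. The key observation is that $f_R$ has exactly the same XOR-plus-Boolean-function structure as $f_E$ and $f_O$, namely $f_R = x^1\oplus y^1\oplus z^1\oplus \Omega_0$, where $\Omega_0$ is the winning target of the R$_2$GHZ game on the first-bit triple $(x^0,y^0,z^0)$ with parity $r_1$ revealed to the party selected by $r_2$. So if the three parties can be made to output local bits $a,b,c$ satisfying $a\oplus b\oplus c=\Omega_0$, then a single XOR-masking step on each sender's side will suffice.

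First, I would have the parties treat the first bits of their input strings as the R$_2$GHZ inputs: Alice uses $(x^0, r_2, r_1\overline{r}_2)$, Bob uses $(y^0, r_2, r_1 r_2)$, and Charlie uses $z^0$. They run the quantum strategy of Lemma \ref{lemma1} on the shared state $\ket{G_3}$: depending on $r_2$, either Alice or Bob uses the knowledge of $r_1$ to pick between the $\mathrm{E}$ and $\mathrm{O}$ measurement choices from Table \ref{tab1}, while the other two parties use the fixed measurements that are common to both rows. By Lemma \ref{lemma1}, the outcomes $a,b,c\in\{0,1\}$ then satisfy $a\oplus b\oplus c=\Omega_0$ deterministically, regardless of the value of $(r_1,r_2)$ and the promise-compliant inputs.

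Next I would implement the classical masking step exactly as in Theorem \ref{theo1}: Alice sends $c_A:=a\oplus x^1$ and Bob sends $c_B:=b\oplus y^1$, each a single bit. Charlie, who holds $c$ and $z^1$, computes
\begin{align}
c_A\oplus c_B\oplus c\oplus z^1 &= (a\oplus b\oplus c)\oplus x^1\oplus y^1\oplus z^1\nonumber\\
&= \Omega_0\oplus x^1\oplus y^1\oplus z^1 = f_R,
\end{align}
which is the desired output. The total classical communication is one bit from each of Alice and Bob, as claimed.

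There is no real obstacle here: the statement is essentially a corollary of Lemma \ref{lemma1} combined with the same linear XOR-masking trick used in Theorem \ref{theo1}. The only point worth flagging is to check that the quantum subprotocol of Lemma \ref{lemma1} can indeed be run independently of the second-bit data $x^1,y^1,z^1$ — which it can, since those bits enter the function $f_R$ only through the outer XOR and never interact with the measurement choices. This separation of ``Boolean payload" ($\Omega_0$) from ``linear payload" ($x^1\oplus y^1\oplus z^1$) is what makes a single bit of communication per sender sufficient.
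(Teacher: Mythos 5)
Your proposal is correct and follows essentially the same route as the paper: run the R$_2$GHZ quantum strategy of Lemma \ref{lemma1} on the first bits of the input strings to obtain $a\oplus b\oplus c=\Omega_0$, then XOR-mask with the second bits and let Charlie compute $c_A\oplus c_B\oplus c\oplus z^1$. (If anything, your phrasing is slightly more careful than the paper's, which says ``second bit'' where the game inputs are in fact the first bits $x^0,y^0,z^0$.)
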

\begin{proof}
The proof follows a reasoning analogous to that of Theorem \ref{theo1}. By treating the second bit of their respective strings as the inputs for the R$_2$GHZ game, Alice and Bob use the same strategy as in the R$_2$GHZ game. Alice and Bob then communicate \( c_A = a \oplus x^1 \) and \( c_B = b \oplus y^1 \) to Charlie. Charlie can then compute the target function exactly using the expression \( c_A \oplus c_B \oplus c \oplus z^1 \).
\end{proof}
An intriguing question is whether a classical strategy requiring three 1-bit channels, two with one party and one with the other, can exactly evaluate the function \( f_R \), similar to Proposition \ref{prop1}. Importantly, before the task begins, the three $1$-bit classical channels can be configured in two distinct ways: Configuration \({\bf C1}\): Two channels from Alice to Charlie and one channel from Bob to Charlie; Configuration \({\bf C2}\): One channel from Alice to Charlie and two channels from Bob to Charlie.
\begin{theorem}\label{theo4}
The function \( f_R \) cannot be evaluated with $3$ classical bit channels from Alice and Bob to Charlie, regardless of whether the channels are configured as \({\bf C1}\) or \({\bf C2}\).
\end{theorem}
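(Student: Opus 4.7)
By the symmetry of $f_R$ under the simultaneous relabelling Alice $\leftrightarrow$ Bob and $r_2\to\overline{r}_2$, configurations ${\bf C1}$ and ${\bf C2}$ are equivalent, so I only need to rule out ${\bf C1}$. I will mimic the algebraic-normal-form (ANF) bookkeeping used in the proof of Theorem~\ref{theo2}, and as there it is enough to rule out deterministic protocols: an exact random mixture would force each of its deterministic components to succeed on its own.

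The first step is to write down generic encodings and decoding. Alice's two bits $c_A^1,c_A^2$ are arbitrary Boolean functions of $(x^0,x^1,r_2,r_1\overline{r}_2)$, Bob's bit $c_B$ of $(y^0,y^1,r_2,r_1r_2)$, and Charlie's decoder $\mathcal D$ is a Boolean function of $(c_A^1,c_A^2,c_B)$ whose coefficients may depend on $(z^0,z^1)$. Crucially, $\mathcal D$ cannot depend on $r_1$ or $r_2$, since Charlie receives neither. Expanding each as an $\mathbb F_2$-polynomial and imposing $\mathcal D \equiv f_R$ on every legal $({\bf x},{\bf y},{\bf z},r_1,r_2)$ turns the claim into a system of polynomial identities in $\mathbb F_2$ for the unknown coefficients.

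Next, I split that system along the $r_2=0$ and $r_2=1$ slices. In the $r_2=0$ slice, Bob's encoding is frozen (because $r_1r_2=0$) while Alice's two bits may adapt to $r_1$, so the same decoder and the same Bob-bit must reproduce $f_E$ at $r_1=0$ and $f_O$ at $r_1=1$ simultaneously; reading off the resulting restrictions on the ANF coefficients yields the direct analogue of items (i)--(ii) from the proof of Theorem~\ref{theo2}, lifted to the three-received-bit setting. A symmetric analysis of the $r_2=1$ slice yields a second set of restrictions, coupled to the first through the shared decoder. The final step is to show that the combined system over both slices has no solution in $\mathbb F_2$.

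The main obstacle is combinatorial rather than conceptual: Alice now contributes two encoded bits, the decoder spans the three-variable Boolean cube, and one must carefully track which ANF coefficients are shared across the two $r_2$ slices and which are slice-specific. I expect the contradiction to surface exactly as in Theorem~\ref{theo2}---some specific decoder monomial is forced to equal $0$ by the constraints coming from one slice and $1$ by those from the other---thereby closing the argument.
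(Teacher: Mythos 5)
Your reduction to deterministic strategies and the \({\bf C1}\leftrightarrow{\bf C2}\) symmetry are fine, and setting up the ANF system \emph{on the legal inputs only} (the promise manifold \(x^0\oplus y^0\oplus z^0=r_1\)) is the right framework. The gap is precisely in the step you leave unexecuted: you assert that the combined system over the two \(r_2\) slices is infeasible, but it is not. Carrying out your own bookkeeping yields, for example, the single \((r_1,r_2)\)-independent decoder \(\mathcal D=c_A^2\oplus c_B\oplus(c_A^1\land z^0)\oplus z^1\) together with the slice-specific encodings: for \(r_2=1\), Alice sends \((c_A^1,c_A^2)=(x^0,x^1)\) and Bob (who then holds \(r_1\)) sends \(c_B=y^1\oplus(y^0\land\overline r_1)\); for \(r_2=0\), Alice (who then holds \(r_1\)) sends \((c_A^1,c_A^2)=(x^0\oplus\overline r_1,\;x^1\oplus(x^0\land\overline r_1))\) and Bob sends \(c_B=y^1\). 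Checking all eight legal \((x^0,y^0,z^0)\) in each slice gives \((x^0\land z^0)\oplus(y^0\land\overline r_1)=\Omega_0\) and \((x^0\land\overline r_1)\oplus\bigl((x^0\oplus\overline r_1)\land z^0\bigr)=\Omega_0\) respectively, so \(\mathcal D=f_R\) exactly in configuration \({\bf C1}\) (and in \({\bf C2}\) by your symmetry). The contradiction you expect ``to surface exactly as in Theorem~\ref{theo2}'' therefore does not exist, and an honest completion of your plan refutes the statement rather than proving it.

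The mechanism you must confront is the promise itself: when \(r_2=1\), Bob can compute \(x^0\oplus z^0=y^0\oplus r_1\) from his own data, so on legal inputs the three restricted functions collapse to one bit of information --- \(\tilde f_{00}=y^1\), \(\tilde f_{11}=\overline{y^1}\), and \(\tilde f_{01}=y^1\oplus\overline r_1\) with \(y^0=\overline r_1\) forced --- all recoverable from the single bit \(c_B=y^1\oplus(y^0\land\overline r_1)\) once Charlie knows \((x^0,z^0)\). Note that the paper's own proof takes a different route (merging Alice into Charlie's lab and claiming \(\{\tilde f_{00},\tilde f_{01},\tilde f_{11}\}\) are ``mutually independent''), but that independence is computed over all \(2^3\) of Bob's inputs rather than the legal ones, and the protocol above passes straight through it. So the failure is not merely that your write-up is incomplete; the infeasibility you are trying to establish is the genuinely problematic claim, and any repair would have to change the task (e.g., so that the promise no longer leaks \(x^0\oplus z^0\) to the party holding \(r_1\)) or target a strictly smaller communication budget.
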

\begin{proof}
Consider the channels configured as \({\bf C1}\), where Alice and Bob respectively shares two channels and one channel with Charlie. Let us analyze the scenario where \( r_2 = 1 \), meaning that the parity bit \( r_1 \) is available to Bob only. Now, consider an even stronger configuration \({\bf C1'}\), where Alice is physically located in Charlie's lab. In this \({\bf C1'}\) configuration, examine the sub-task with \( x^1 = z^1 = 0 \). For perfect evaluation of \( f_R \), in this sub-task Charlie and Alice together should be able to evaluate the function \( \tilde{f}_{x^0z^0} \), defined as:
\begin{align}
\left\{\!\begin{aligned}
\tilde{f}_{00}:=y^1\oplus(y^0\land \overline{r}_1),~\tilde{f}_{01}:=\tilde{f}^{10}_R=y^1\oplus\overline{r}_1,\\
\tilde{f}_{11}=y^1\oplus\overline{r}_1\oplus(y^0\land r_1)~~~~~~~~~~~
\end{aligned}\right\}.
\end{align}
The three functions \(\{\tilde{f}_{00}, \tilde{f}_{01}, \tilde{f}_{11}\}\), being mutually independent, cannot be perfectly evaluated by Charlie and Alice in the \({\bf C1'}\) configuration with only 1-bit communication from Bob. Consequently, the sub-task cannot be achieved in the \({\bf C1}\) configuration either. This completes the proof. 
\end{proof}

Our next result presents an exact protocol for the \(\mathrm{R_2CC_2}\) task using only classical resources.
\begin{proposition}\label{prop3}
The functions $f_R$ can be evaluated by Charlie exactly by sharing two 1-bit channels with each of Alice and Bob.   
\end{proposition}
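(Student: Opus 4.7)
The plan is to exhibit an explicit classical protocol that consumes the stated two-plus-two bit budget and then verify that Charlie has enough information to reconstruct $f_R$. Since the total budget is exactly four bits, which is the full length of Alice's and Bob's input strings ${\bf x}$ and ${\bf y}$, the natural protocol is simply to have each sender forward their entire two-bit string verbatim: Alice uses her two $1$-bit channels to transmit ${\bf x}=x^0x^1$, and Bob uses his two $1$-bit channels to transmit ${\bf y}=y^0y^1$.

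Next I would verify that Charlie, who already holds ${\bf z}=z^0z^1$, can now compute $f_R$. The key observation is that $r_1$, though not transmitted explicitly, is implicitly available to Charlie via the promise ${\bf P}_R$: knowing $x^0,y^0,z^0$ he recovers $r_1 = x^0 \oplus y^0 \oplus z^0$. With $x^0,y^0,z^0$ and $r_1$ in hand, Charlie computes
\begin{equation}
\Omega_0 = [(x^0 \lor y^0 \lor z^0)\land \overline{r}_1] \lor [(x^0 \land y^0 \land z^0)\land r_1],
\end{equation}
and then
\begin{equation}
f_R = x^1 \oplus y^1 \oplus z^1 \oplus \Omega_0.
\end{equation}
It is also worth noting explicitly that $r_2$ plays no role in the function $f_R$ itself; it only controls to whom the parity bit $r_1$ is revealed in the upstream R$_2$GHZ game, so Charlie never needs to learn $r_2$.

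There is no genuine obstacle here; the content of the statement is merely that four bits of communication suffice. The role of the proposition is structural: together with Theorem~\ref{theo3} and Theorem~\ref{theo4}, it pins down the classical communication cost of $\mathrm{R_2CC_2}$ at exactly four bits and thereby certifies a two-bit advantage of the quantum protocol enabled by $\ket{G_3}$ over any classical strategy, matching the gap advertised in the introduction.
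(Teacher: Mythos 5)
Your proposal is correct and follows exactly the paper's own argument: both senders forward their full two-bit strings, and Charlie recovers $r_1 = x^0\oplus y^0\oplus z^0$ from the promise condition to evaluate $f_R$. The additional remark that $r_2$ is irrelevant to the function itself is a small but accurate clarification beyond what the paper states.
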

\begin{proof}
Each of Alice and Bob communicates their respective bit strings \({\bf x}\) and \({\bf y}\) to Charlie through $2$-bit classical channel. From \(x^0\), \(y^0\), and \(z^0\), Charlie first determines the parity bit \(r_1\), and then evaluates the function \(f_R\).      
\end{proof}
Proposition \ref{prop3}, together with Theorems \ref{theo3} and \ref{theo4}, establishes that the \(\ket{G_3}\) state provides a $2$-bit communication advantage in the \(\mathrm{R_2CC_2}\) task. Since the correlation resulting from the conventional GHZ game offers only a $1$-bit communication advantage in the \(\mathrm{CC_2}\) task, the correlation emerging from the R$_2$GHZ game, therefore, demonstrates a kind of stronger nonlocality.
  
{\it Discussions.--} Beyond its foundational importance, quantum nonlocality has far-reaching applications, including device-independent quantum protocols that function without requiring knowledge of the internal workings of the devices involved \cite{Brunner2014, Scarani2012}. In the context of quantum communication complexity, the quantum advantage \cite{Buhrman1998, Cleve1999, Buhrman1999, Buhrman2001} is achieved by leveraging the nonlocal properties of quantum entanglement \cite{Buhrman2010}. Traditional communication complexity tasks \cite{Yao1979, Kushilevitz1996, Kushilevitz1997} involve inputs distributed across multiple servers, with one server tasked with evaluating a specific function. Our $\mathrm{R_2CC_2}$ task extends this setup by randomizing the function to be computed and distributing this information randomly among the sending servers, rather than the computing server. Remarkably, this extension demonstrates that correlations from a GHZ state can provide a greater advantage than those seen in the conventional GHZ paradox. For future work, it would be interesting to explore whether similar randomization techniques could be applied to other quantum pseudo-telepathy games, potentially uncovering even stronger advantages of quantum entanglement.

{\bf Acknowledgment:} KA acknowledges support from the CSIR project {$09/0575(19300)/2024$-EMR-I. SGN acknowledges support from the CSIR project $09/0575(15951)/2022$-EMR-I. MB acknowledges funding from the National Mission in Interdisciplinary Cyber-Physical systems from the Department of Science and Technology through the I-HUB Quantum Technology Foundation (Grant no: I-HUB/PDF/2021-22/008).

%

\end{document}